\documentclass{article}
\usepackage{graphicx,amssymb}
\pdfoutput=1

\usepackage[utf8]{inputenc}
\usepackage{fullpage}
\usepackage{amsthm,mathscinet}
\usepackage[usenames, dvipsnames]{color}
\usepackage[none]{hyphenat}
\usepackage{hyperref}
\usepackage{todonotes}
\usepackage{url}

\usepackage{thm-restate}
    \newtheorem{theorem}{Theorem}
    
    \newtheorem{lemma}[theorem]{Lemma}

    \newtheorem{claim}[theorem]{Claim}

\newcommand{\remove}[1]{{}}

\newcommand{\changed}[1]{{#1}}

\title{Reconstructing a Polyhedron between Polygons in Parallel Slices}
\author{
        Therese Biedl\thanks{Cheriton School of Computer Science, University of Waterloo. Research of TB, VI and AL supported by NSERC. This research was initiated at the Algorithmic Problem Session group at the University of Waterloo} \\
        Anna Lubiw\footnotemark[1]
    \and
        Pavle Bulatovic\footnotemark[1]\\
        Owen Merkel\footnotemark[1]
    \and
        Veronika Irvine\footnotemark[1]\\ 
        Anurag Murty Naredla\footnotemark[1]
}


\begin{document}

\maketitle

\begin{abstract}
Given two $n$-vertex polygons, $P=(p_1, \ldots, p_n)$ lying in the $xy$-plane at $z=0$, and $P'=(p'_1, \ldots, p'_n)$ lying in the $xy$-plane at $z=1$,
a \emph{banded surface} is a triangulated surface homeomorphic to an 
annulus connecting $P$ and $P'$ such that the \changed{triangulation's edge set} 
contains vertex disjoint paths $\pi_i$ connecting $p_i$ to $p'_i$ for all $i =1, \ldots, n$.  
The surface then consists of \emph{bands}, where the $i$th band goes between $\pi_i$ and $\pi_{i+1}$. 
We give a polynomial-time algorithm to find a banded surface without Steiner points if one exists.
We explore connections between banded surfaces and linear morphs, where time in the morph corresponds to the $z$ direction.  In particular, we show that if $P$ and $P'$ are convex and the linear morph from $P$ to $P'$ (which moves the $i$th vertex on a straight line from $p_i$ to $p'_i$) 
remains planar at all times,
then there is a banded surface without Steiner points.

\end{abstract}

\section{Introduction}

The problem of reconstructing a 3D polyhedral structure between two planar cross-sections has
been heavily studied because of its many practical applications, e.g., in medicine, \changed{for} 
constructing models of body organs from MRI slices. 
Most approaches, e.g.~\cite{barequet2004contour}, separate the problem into two steps, both of which are hard and are tackled via heuristics:  
(1) choose a correspondence between the two cross-sections; (2) then 
construct a triangulated surface using extra Steiner points.
The problem is considered to be well-solved by these heuristic methods, but many theoretical questions remain open.
We focus on the second step, i.e., we assume that the correspondence between the two cross-sections is given. 
Also, we focus on the case of two polygons, though the  case  of general planar subdivisions (i.e., planar graph drawings) is also of interest.

There is a close connection between the polyhedron reconstruction  problem and the problem of ``morphing'' or continuously transforming one planar structure to another.  
This connection is explained in more detail later in the Introduction, and motivates our formulation of the polyhedron reconstruction problem.


Given two simple $n$-vertex polygons, $P=(p_1, \ldots, p_n)$ lying in the $xy$-plane at $z=0$, and $P'=(p'_1, \ldots, p'_n)$ lying in the $xy$-plane at $z=1$, we want to 
interpolate between them by constructing 
a non-self-intersecting \changed{triangulated} surface $\cal S$ homeomorphic to an open-ended cylinder (an annulus), with $P$ at one end and $P'$ at the other end.
\changed{Vertices of $\cal S$ that are not vertices of $P$ or $P'$ are called \emph{Steiner points}.}
We want the surface to be monotone, in the sense that any plane $z=t$ intersects the surface in one simple (non-self-intersecting) polygon. 
Furthermore, we want to maintain the correspondence between $p_i$ and $p'_i$ in 
the following strong sense: for each $i$ there is a path $\pi_i$ of edges in 
\changed{the triangulation of}
$\cal S$ from $p_i$ to $p'_i$, and these paths are vertex disjoint.    
The paths then partition the surface $\cal S$ into interior-disjoint \emph{bands} $B_1, \ldots, B_n$, where $B_i$ is the subset of $\cal S$ between $\pi_i$ and $\pi_{i+1}$.
We call $\cal S$ a \emph{banded surface} 
and we call this problem \emph{banded surface reconstruction between parallel slices} or just ``banded surface reconstruction''.
Figure~\ref{fig:Schonhardt} shows some examples.

\begin{figure}[thb]
    \centering
    \includegraphics[width=\columnwidth]{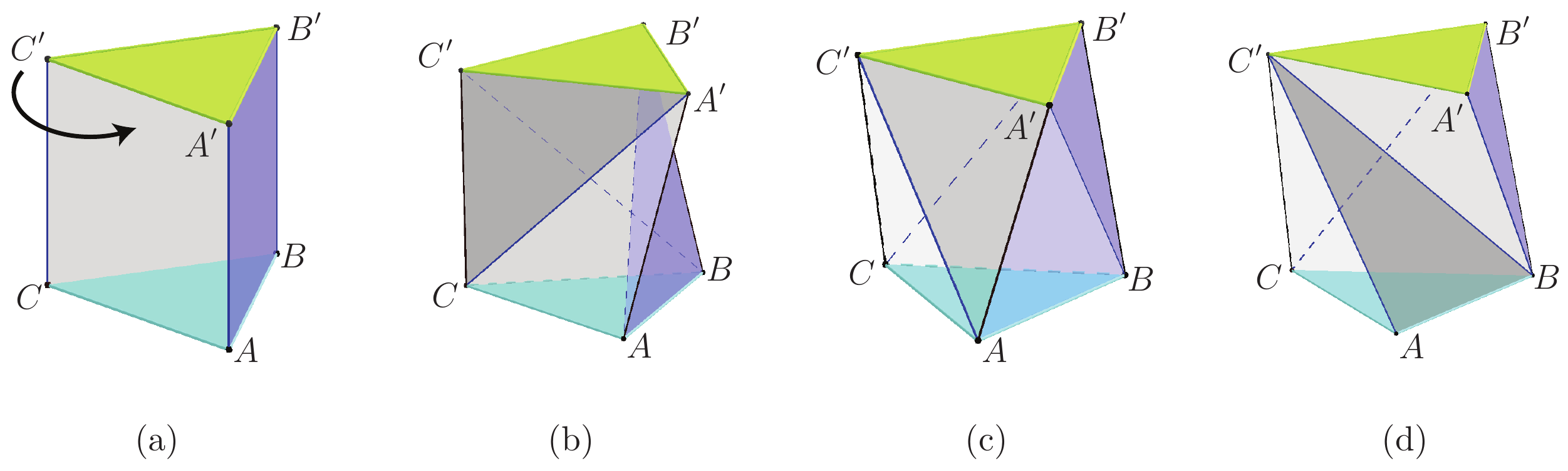}
    \caption{Examples of banded surfaces  without Steiner points for two triangles $P$ and $P'$. (a) To construct $P$ and $P'$, start with a triangular prism based on equilateral triangle $P = ABC$, and then rotate the top triangle to obtain $P' = A'B'C'$. (b) The Sch\"onhardt polyhedron is a banded surface formed by bending each original rectangular face inward to form two triangles, using the ``right'' chords $AB'$, $BC'$, $CA'$.  
    (c) Using the outward or 
    ``left'' chords, $AC'$, $CB'$, $BA'$ also yields a banded surface (an antiprism when $P'$ is rotated by $60^\circ$).
    (d) An example of a triangulated surface that is not banded due to the lack of a path from $A$ to $A'$ disjoint from $BB'$ and $CC'$.
    }
    \label{fig:Schonhardt}
\end{figure}

\begin{figure}[hbt]
    \centering
    \includegraphics[width=\columnwidth]{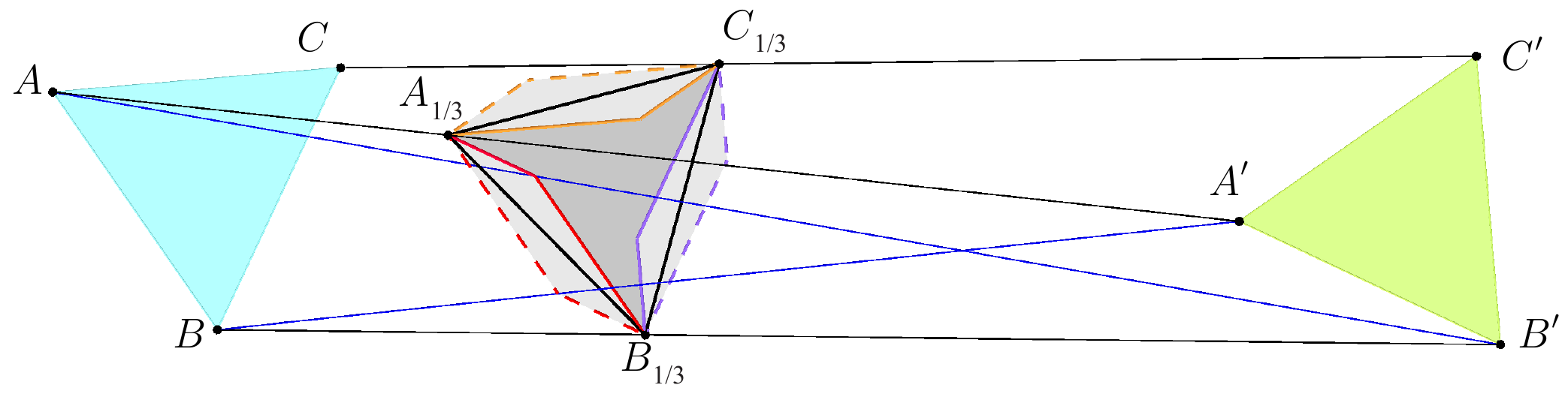}
    \caption{The examples of Figure~\ref{fig:Schonhardt}(b,c) in top-view with triangle $A'B'C'$ translated horizontally.
    (Invariance under translation is proved in Lemma~\ref{lem:translation}.) The cross-section at $z=1/3$ shows the triangle $A_{1/3} B_{1/3} C_{1/3}$ of the linear morph, together with the inward (solid colour) and outward (dashed colour) choices for each edge. Observe that whereas the linear morph uses the edge $A_{1/3}B_{1/3}$ at $z=\frac{1}{3}$, the inward banded surface
    using chord $AB'$ uses two edges (shown in solid red), the first  parallel to $A'B'$ and the second parallel to $AB$, and the outward banded surface using chord $A'B$ uses two edges (shown in dashed red), the first parallel to $AB$ and the second parallel to $A'B'$. 
    }
    \label{fig:Sconhardt-stretch}
\end{figure}

The condition that the surface be homeomorphic to an  annulus prevents 
undesirable ``solutions'' 
such as 
placing one Steiner point $X$  at $z=\frac{1}{2}$ 
and building cones from the configurations at $z=0$ and $z=1$ to $X$. (The fact that these cones do not self-intersect is proved in~\cite{orourke}.)
The condition that the surface is monotone means that the surface provides a morph from $P$ to $P'$, specifically, take $P^t$, for $t \in [0,1]$ to be the intersection of the surface with the plane $z=t$.
Our condition on vertex disjoint paths means that edges of the polygon are maintained throughout this morph in the sense that an edge may become a polygonal path in $P^t$, but it never collapses to a  point.    


The best bound we know on the number of Steiner points required for a banded surface is $O(n^2)$.
For most of our results 
we concentrate on the case where no Steiner points are allowed.
Understanding this case may lead to more general solutions where we design $\cal S$ in layers using intermediate polygons (made of Steiner points) at a succession of $z$ values, and build surfaces without additional Steiner points between successive layers.   

When no Steiner points are allowed 
we must use the edges $(p_i, p'_i)$, and our only choice is how to triangulate each quadrilateral $p_i, p_{i+1}, p'_{i+1}, p'_i$.  There are two possible chords for each quadrilateral: the \emph{right} chord $(p_i, p'_{i+1})$ or the \emph{left} chord $(p_{i+1}, p'_i)$. 
The difference between these two choices can be seen in
Figure~\ref{fig:Schonhardt}(b) and (c), and also in Figure~\ref{fig:Sconhardt-stretch}.
An example of two triangles with no banded surface is shown in Figure~\ref{fig:neg-examples}(a).

%

\paragraph{Our Results.} \changed{We prove the following:}
\begin{enumerate}
    \item For $P$ and $P'$ on $n$ vertices, there exists a banded surface with $O(n^2)$ Steiner points.

    \item \changed{There is} 
    a polynomial time algorithm (using 2-SAT) to decide the banded surface reconstruction problem when no Steiner points are allowed.
    
    \item \changed{The} 
    existence of a banded surface without Steiner points is preserved by translating $P'$.

    \item 
    If $P$ and $P'$ are convex and the linear morph from $P$ to $P'$ preserves \changed{planarity} 
    \changed{(these terms are defined below)}
    then there is a banded surface without Steiner points between $P$ and $P'$.  
    This no longer holds if $P$ and $P'$ are non-convex.
    
    \item In the other direction, 
    the existence of a banded surface without Steiner points does not imply that the linear morph preserves planarity, not even when $P$ and $P'$ are triangles.  See Figure~\ref{fig:neg-examples}(b).

\end{enumerate}

\begin{figure}[thb]
    \centering
    \includegraphics[width=.7\columnwidth]{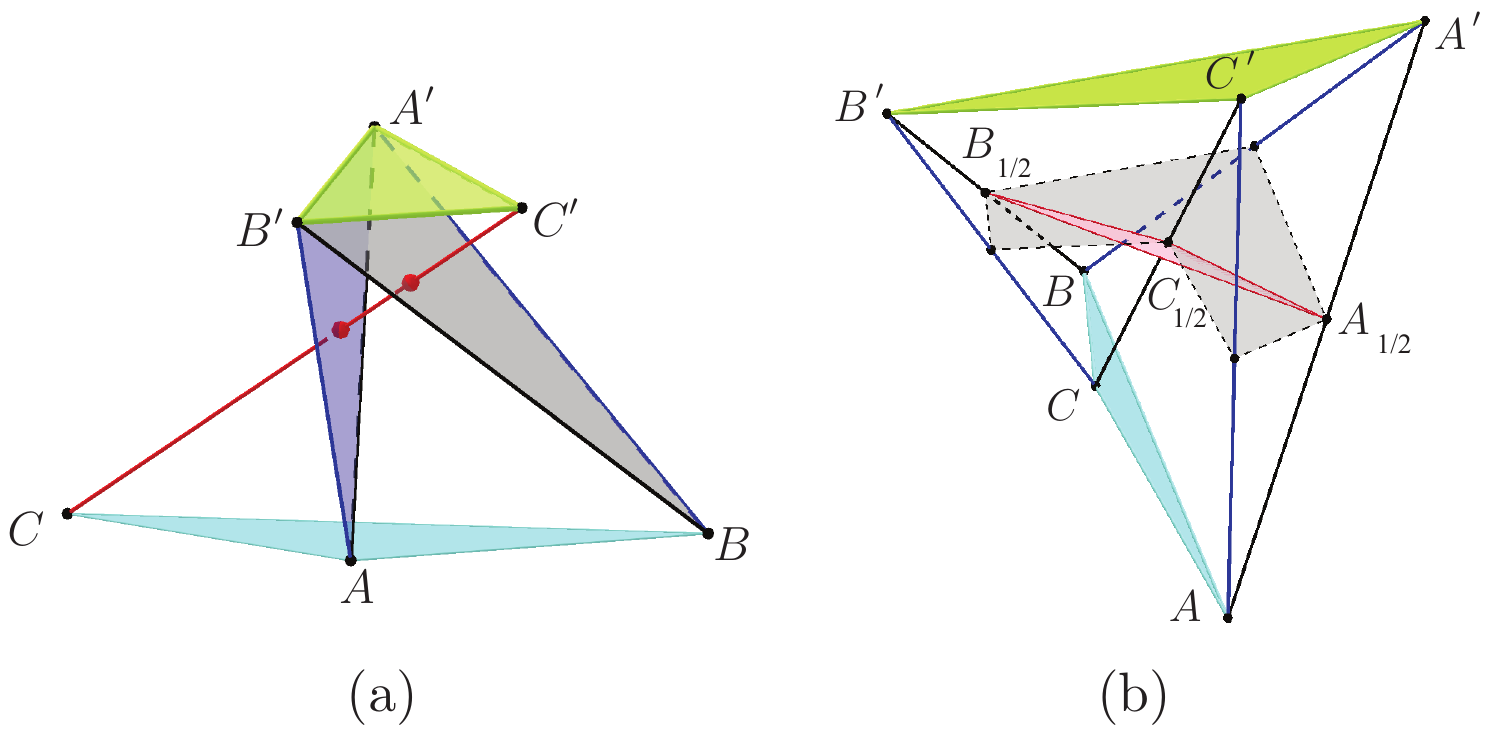}
    \caption{
    (a) Triangles $P  = ABC$ in  the $z=0$ plane and $P' = A'B'C'$ in the $z=1$ plane have no banded surface without Steiner points: the edge $A'B'$ must be in a triangle with either vertex $A$ or vertex $B$ but both those triangles intersect the edge $CC'$.   
    (b) A banded surface between $P = ABC$ and $P' = A'B'C'$ using chords $AC'$, $BA'$, and $CB'$, showing the cross-section (\changed{dashed}, shaded grey) at $z=\frac{1}{2}$.  However, the linear morph from $ABC$ to $A'B'C'$ does not preserve planarity since at $z=\frac{1}{2}$ the triangle $A_{1/2} C_{1/2} B_{1/2}$ (shown in red) is inverted.
    }
    \label{fig:neg-examples}
\end{figure}

\paragraph{Previous Work.}
Gitlin, O'Rourke and Subramanian \cite{orourke} considered a similar problem of joining two polygons via a triangulated surface without adding Steiner points.
However, they did not 
\changed{require disjoint paths}
from $p_i$ to $p'_i$, 
which gives a lot more freedom, \changed{e.g., the two polygons can have different numbers of vertices.}  
Essentially, every edge of $P$ must be in a triangle with some vertex of $P'$, and vice versa, and these triangles must 
form a non-self-intersecting surface homeomorphic to an annulus.
Their main result was a construction of a pair of polygons on 63 vertices with no triangulated surface between them.  Their proof involved a computer search.   
Barequet and Steiner~\cite{barequet2008matability} gave a slightly 
simpler example on 45 vertices.
The problem of testing whether two polygons can be joined via a triangulated surface without Steiner points is not known to be NP-complete (nor in P).   
There is a surprising  upper bound on  the  number of Steiner  points required for a triangulated surface.
Geiger~\cite[Appendix A]{geiger1993} proved that  
it suffices to add two Steiner points, one on an edge of $P$ and one on an edge of $P'$.  To do this, he first constructed a degenerate surface consisting of two cones, one with $P$ as a base 
and  the rightmost vertex of $P'$ as its apex, and one with $P'$ as a base, and the leftmost vertex of $P$ as its apex.
These two cones share one edge, but by adding the two Steiner points, the shared edge can be pulled apart so that the  two cones become a single surface homeomorphic to an annulus. 
This construction is at the heart of our argument in Section~\ref{sec:construct} that $O(n^2)$ Steiner points suffice to construct a  banded surface.

In more applied work, there is a vast literature about interpolating between two families of nested polygons 
lying in parallel planes via a 
triangulated surface, see~\cite{barequet2004contour,barequet1996piecewise}. 
Barequet and Sharir~\cite{barequet1996piecewise} write: 
``The primary concern in the literature has usually been to find fast heuristics for selecting a `good' reconstruction among the many available solutions.''
There is little work analyzing the number of Steiner points, or examining when a solution with no Steiner points is possible.

Our problem is related to the problem of finding a piecewise linear embedding of a 2D simplicial complex 
in 3D,  which  was  recently shown to be NP-hard~\cite{demesmey2018embeddability}.
(One dimension down this is easy, since it is the 
problem of finding a (poly-line) planar drawing of a graph.)
Specifically, the 2D complex that we want to embed in 3D 
consists of the quadrilaterals $p_i, p_{i+1},p'_{i+1}, p'_{i}$, and we have the further constraint that the embeddings of $P$ and $P'$ are already fixed in the 3D space.  
Our additional structure ensures that there always is a solution 
so  the interesting  problems are to minimize the number of Steiner points and/or to optimize other parameters of the solutions such as the bit  complexity of the Steiner points, the lengths of the paths from $p_i$ to $p'_i$, or etc.


\paragraph{Relationship to Morphing.}
A morph is a continuous transformation from one shape to another.  In particular, 
a \emph{morph} from an initial simple polygon  [or planar straight-line graph drawing] $P^0$ to a final one, $P^1$, with the same labelled vertices,
is a continuously changing family of polygons [or graph drawings] $P^t$ indexed by  time $t \in [0,1]$.
A morph \emph{preserves planarity} if all intermediate polygons [drawings] $P^t$ are planar.
In a \emph{linear morph}  each vertex moves on a straight line from its initial position to its final position at constant speed (where the speed of a vertex depends on the distance it must travel), and 
an edge is always drawn as a line segment between its endpoints.

Our problem of reconstructing a 3D polyhedral structure between two planar drawings
is closely related to morphing---the $z$ direction corresponds to time $t$ in the morph.  
In fact, it is claimed (for example, by Surazhsky and Gotsman~\cite{surazhsky2004}) that morphing algorithms 
solve 3D shape reconstruction.  
We now examine this claim more closely.
Figure~\ref{fig:morphing-and-surfaces} 
illustrates the idea. Initialize $P^0$ to $P$ and $P^1$ to $P'$.   
Given a morph $P^t$, $t \in [0,1]$ between $P^0$ and $P^1$, take a finite set of ``snapshots'' at time points $t_1, \ldots, t_k$, and form a quadrilateral ``patch'' between successive vertices $p_i$ and $p_{i+1}$ at times $t_j$ and $t_{j+1}$.   Each patch is a ruled surface, and the union of the patches provides a surface in 3D joining $P^0$ and $P^1$.
In order to obtain a piece-wise linear surface we must
replace each quadrilateral patch by two triangles.   
This may cause the surface to self-intersect (if it doesn't already).
It seems intuitive that self-intersections can be avoided by 
taking sufficiently many snapshots, but such analysis is lacking.

\begin{figure}
    \centering
    \includegraphics[width=0.4\columnwidth]{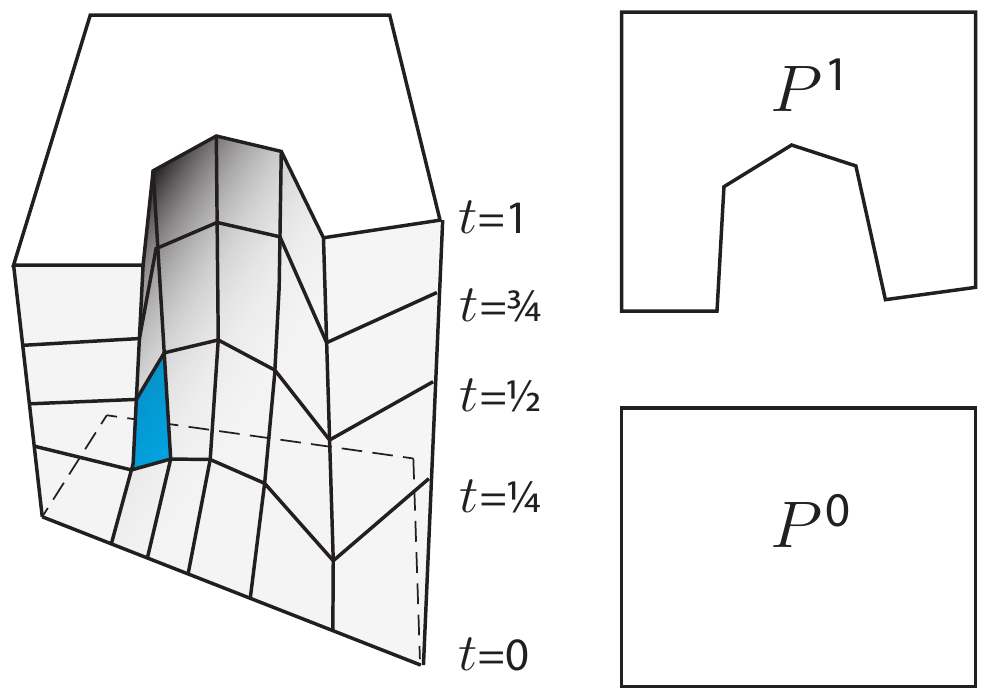}
    \caption{A morph from a rectangle $P^0$ at time (or $z$-coordinate) $t=0$ to the ``arch-shaped'' polygon $P^1$ at time $t=1$ yields a 3D interpolation by taking ``snapshots'' of the morph at intermediate time points $t = \frac{1}{4}, \frac{1}{2}, \frac{3}{4}$, 
    and joining corresponding vertices between one snapshot and the next.
    Note that the resulting quadrilateral patches (one of which is coloured blue) are not planar in general.
    This figure is loosely based on one by Surazhsky and Gotsman~\cite[Figure 10]{surazhsky2004}.}  
    \label{fig:morphing-and-surfaces}
\end{figure}


An algorithm by 
Alamdari et al.~\cite{alamdari2016morph}
finds ``piece-wise linear'' morphs that consist of a 
sequence of planarity-preserving linear morphs.
This would provide a solution to banded surface reconstruction if we could 
show how to add Steiner
points to turn each linear morph into a triangulated surface. 

In the other direction, a banded surface (even one with Steiner points)  can be interpreted as a morph between the polygons $P$ and $P'$, albeit a morph in  which each edge may become a poly-line.
Such ``morphs with bent edges'' have been investigated~\cite{lubiw-petrick} and come with small grid guarantees, unlike the piece-wise linear morphs of~\cite{alamdari2016morph}. 
A banded surface without Steiner points provides a morph with the interesting property that  
in any intermediate drawing of the morph, an edge $e$ appears as a path of two line segments, one in the direction of the initial version of $e$ and the other in the direction of the final version of $e$.
See Figure~\ref{fig:Sconhardt-stretch}.
Such morphs may be valuable for visualizations. 
We note that there is work on morphing while maintaining edge directions~\cite{biedl2005morphing}---this only applies in the restricted situation where the initial and final polygons have corresponding edges with the same directions.

%

To summarize, it seems worth investigating to what extent linear morphs provide banded surfaces, and to what extent banded surfaces provide morphs.

\section{Finding a Banded Surface with/without Steiner Points}
\label{sec:construct}

In this section we show that $O(n^2)$ Steiner points suffice to construct a banded surface between $n$-vertex polygons $P$ and $P'$, and  we give an algorithm 
to find---if it exists---a banded surface without Steiner points between $P$ and $P'$.

As discussed above, Geiger~\cite{geiger1993} showed how to add two Steiner points to construct a triangulated surface between $P$ and $P'$.  The surface he constructs is monotone, homeomorphic to an  annulus, and consists of $O(n)$ triangles.  To construct a banded surface, take disjoint paths on the surface from $p_i$ to $p'_i$ for each $i$, and  refine the triangulation to include all the line segments  of these paths.  Since there are $n$ paths each crossing $O(n)$ triangles, the result is a surface of $O(n^2)$ triangles. 
We note that the same bound $O(n^2)$ can be obtained using a technique from Piecewise Linear (PL) Topology 
in which ears of the polygon are collapsed in successive steps.  (See ``elementary contractions'' in the classical book~\cite{Seifert-Threlfall} or the lecture notes~\cite[p.~30]{Lackenby}.) 
One ear collapse replaces a convex vertex $p_i$ by a vertex on the line segment from $p_{i-1}$ to $p_{i+1}$, resulting in a polygon with one fewer line segments. 
The new polygon is placed on a slightly higher $z$ plane and the two successive polygons are joined with $n$ triangles.
In  this way we collapse $P$ (working upwards in 3D) to a triangle $T$, collapse $P'$ (working downwards in 3D) to a triangle $T'$ and finally build a surface joining $T$ and $T'$. 
This may produce a nicer surface than the one obtained from Geiger's construction.

We next describe an algorithm using 2SAT to find a banded surface without Steiner points, if one exists.
The edges $p_i p'_{i}$ must be used.  
For each $i=1, \ldots, n$ we have the choice of the right chord $p_i p'_{i+1}$ or the left chord $p_{i+1} p'_i$.  
Let the Boolean variable $R_i$ be 1 if the right chord is chosen and 0 otherwise.  
Each chord choice determines two triangles of the surface, for example
$R_i=1$ determines triangles $p_i p_{i+1} p'_{i+1}$ and $p_i p'_{i+1} p'_i$.
We say that chord choices for $i$ and $j$ \emph{conflict} if the resulting open triangles intersect.
Note that this can be tested, for given $i,j$, in constant time.
The problem of choosing chords to form a non-self-intersecting surface can be formulated as a Boolean satisfiability problem by adding a clause to prohibit conflicts,  e.g., if chord choices $R_i$ and $\neg R_j$ conflict then we add the clause $\neg(R_i \wedge \neg R_j)$.
\changed{Note that there are $O(n^2)$ clauses.}

There is a banded surface without Steiner points if and only if the resulting clauses are satisfiable.  Because all clauses have two variables, the result is a 2-SAT instance.   Since 2-SAT can be solved in linear time~\cite{aspvall1979linear}, we have:

\begin{lemma}
\label{lem:2SAT}
There is a \changed{quadratic} 
time algorithm that either finds a banded surface without Steiner points, or declares that no such surface exists.
\end{lemma}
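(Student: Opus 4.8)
The plan is to reduce the problem to 2-SAT, which the excerpt has essentially already set up, and then carefully account for the running time of each phase. First I would formalize the variable set: introduce one Boolean variable $R_i$ for each $i \in \{1, \ldots, n\}$, so there are exactly $n$ variables. The surface is forced to use every edge $p_i p'_i$, so the only freedom lies in the $n$ binary chord choices; thus any non-self-intersecting banded surface corresponds to a truth assignment to these variables, and conversely. I would state this correspondence as the correctness backbone: a banded surface without Steiner points exists if and only if the 2-SAT instance is satisfiable.

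Next I would construct the clauses. For each ordered pair $(i,j)$ with $i \neq j$, and for each of the four combinations of chord choices (right/left for $i$ crossed with right/left for $j$), I test in $O(1)$ time whether the two resulting pairs of open triangles intersect, using the fact noted in the excerpt that a conflict between two fixed chord choices is a constant-time geometric test. Whenever a combination conflicts, I add the corresponding two-literal clause forbidding it, e.g.\ $\neg(R_i \wedge \neg R_j)$. Since there are $O(n^2)$ pairs and $O(1)$ work per pair, the clause-generation phase takes $O(n^2)$ time and produces $O(n^2)$ clauses, as already observed.

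I would then invoke the linear-time 2-SAT solver of Aspvall, Plass and Tarjan~\cite{aspvall1979linear}, which runs in time linear in the number of variables plus clauses, hence $O(n + n^2) = O(n^2)$ here. If the instance is satisfiable, the returned assignment yields the chord choices and therefore an explicit banded surface; if unsatisfiable, we correctly declare that none exists. Summing the $O(n^2)$ clause-construction cost and the $O(n^2)$ solver cost gives the claimed quadratic running time.

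The main obstacle is not the reduction itself but justifying correctness of the \emph{conflict} test: I must argue that the disjointness of every pair of chosen open triangles is both necessary and sufficient for the union of all chosen triangles to form a valid (non-self-intersecting, annulus-homeomorphic, monotone) banded surface. Necessity is clear, since any two intersecting open triangles destroy embeddedness. For sufficiency I would check that pairwise non-intersection of the open triangles, combined with the fact that consecutive quadrilaterals share exactly the forced edge $p_{i+1} p'_{i+1}$ and that each quadrilateral is triangulated into two triangles meeting along a chord, already forces the global surface to be a properly embedded annulus carrying the required vertex-disjoint paths $\pi_i$ along the forced edges $p_i p'_i$. This gluing argument, ensuring that local (pairwise) consistency implies the desired global topology, is the step that needs the most care; the timing analysis is then routine.
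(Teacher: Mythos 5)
Your proposal matches the paper's proof essentially verbatim: one variable $R_i$ per quadrilateral, $O(n^2)$ pairwise conflict clauses each generated by a constant-time geometric test, and the linear-time 2-SAT solver of Aspvall, Plass and Tarjan, giving quadratic time overall. The gluing/sufficiency concern you raise at the end is a fair point, but the paper itself treats it as immediate and does not elaborate, so your write-up is if anything slightly more careful than the original.
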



\section{Conditions for Existence of 
a Banded Surface without Steiner Points}

One approach to banded surface reconstruction with Steiner points is 
to subdivide the interval $z \in [0,1]$ into $0{=}z_0, z_1, \ldots, z_k{=}1$ and place an $n$-vertex polygon at each $z_i, 0 < i < k$ so that each successive pair of polygons admits a banded surface without Steiner points.  Using this approach, the final solution would have $nk$ Steiner points.

In order to design the intermediate polygons,
it would be good to have conditions for when two polygons admit a banded surface without Steiner points.  (Our polynomial-time test from the  previous section  does not seem helpful when the polygons are not given).

In this section we explore two situations where we can guarantee the existence of a banded surface without Steiner points.
We show:
\begin{enumerate}
    \item Translation of $P'$ in the $z=1$ plane preserves the existence  of a banded surface  without Steiner points
    \changed{(Lemma~\ref{lem:translation}).}
    \item If $P$ is convex and $P'$ is a rotation of $P$ by an angle less than \changed{$\pi$}, 
    then a banded surface without Steiner points exists
    \changed{(Lemma~\ref{lem:rotation}).}
    The example of Figure~\ref{fig:star} shows that this property does not hold more generally, not even for a star-shaped polygon.
\end{enumerate}

We first show 
how translation of the target-polygon affects the intermediate polygons in a linear morph:

\begin{lemma} 
\label{lem:translation_new}
Let $P$ be an $n$-vertex polygon in the $z=0$ plane and $P'$ be an $n$-vertex polygon in the $z=1$ plane.  
Let $Q'$ be a translation of $P'$ within the $z=1$ plane.  
For any $0<t<1$, if $P_t$  is the polygon at time $t$ during \changed{the} linear morph from $P$ to $P'$, and $Q_t$ is the polygon at time $t$ during \changed{the} linear morph from $P$ to $Q'$, then $Q_t$ is a translation of $P_t$ within the $z=t$ plane.
\end{lemma}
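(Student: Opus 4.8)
The plan is to work directly from the closed-form description of a linear morph. Since in a linear morph each vertex travels at constant speed along the straight segment from its start to its end, the position of vertex $i$ at time $t$ is the affine interpolation
\begin{equation*}
p_i(t) = (1-t)\,p_i + t\,p'_i,
\end{equation*}
and likewise $q_i(t) = (1-t)\,p_i + t\,q'_i$ for the morph from $P$ to $Q'$ (both morphs start at the same $P$). The whole argument then reduces to exploiting the linearity of this expression in the endpoints.

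The key step is to feed in the hypothesis that $Q'$ is a translate of $P'$. Writing the (horizontal) translation vector as $\vec{v}$, so that $q'_i = p'_i + \vec{v}$ for every $i$, I would substitute and simplify:
\begin{equation*}
q_i(t) = (1-t)\,p_i + t\,(p'_i + \vec{v}) = \bigl((1-t)\,p_i + t\,p'_i\bigr) + t\,\vec{v} = p_i(t) + t\,\vec{v}.
\end{equation*}
Thus every vertex of $Q_t$ is obtained from the corresponding vertex of $P_t$ by adding the single fixed vector $t\,\vec{v}$, which is independent of $i$.

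To finish, I would observe that because corresponding edges of $P_t$ and $Q_t$ are drawn as segments between corresponding (and equally translated) vertices, translating all vertices by $t\,\vec{v}$ translates the entire polygon $P_t$ to $Q_t$; the map commutes with drawing straight edges. Finally, since $\vec{v}$ lies in the $z=1$ plane it has no $z$-component, so neither does $t\,\vec{v}$, and the translation takes place \emph{within} the $z=t$ plane, as required. There is no real obstacle here beyond bookkeeping: the only points needing a word of care are that the claim concerns the full polygon rather than just its vertex set (handled by linearity of the edge map), and that the translate keeps $Q_t$ in the same horizontal slice (handled by $\vec{v}$ being horizontal).
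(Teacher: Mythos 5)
Your proposal is correct and is essentially identical to the paper's proof: both compute $q_t = (1-t)p + t(p'+s) = p_t + ts$ and conclude that $Q_t$ is the translate of $P_t$ by $t\cdot s$. Your extra remarks about edges and the horizontality of the translation vector are fine but not a different argument.
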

\begin{proof}
Set $s=Q'-P'$ to be the translation vector and consider an arbitrary point $p$ of $P$ that morphs to point $p'$ of $P'$ and $q'$ of $Q'$.  We have $q'=p'+s$, and hence
$$q_t = (1-t)p+tq' = (1-t)p+tp' + ts = p_t + t\cdot s$$
so polygon $Q_t$ is a translation of $P_t$ by $t\cdot s$.
\end{proof}

In particular, if the linear morph from $P$ to $P'$ preserves 
\changed{planarity}, 
then the same holds for the linear morph from $P$ to any translation of $P'$.  We can argue the same for banded surfaces:

\begin{lemma} 
\label{lem:translation}
Assume that $P, P'$ and $Q'$ are as in Lemma~\ref{lem:translation_new}.  
If there is a banded surface without Steiner points between $P$  and $P'$,  then 
the same choice of chords yields
a banded surface without Steiner points between $P$ and $Q'$.
\end{lemma}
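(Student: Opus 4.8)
The plan is to exhibit a single affine transformation of $\mathbb{R}^3$ that carries the entire banded surface for $(P,P')$ to a banded surface for $(P,Q')$ while fixing $P$ and preserving every structural property we need. Let $s = Q'-P' = (s_x,s_y)$ be the horizontal translation vector, and consider the shear
$\phi(x,y,z) = (x + z\,s_x,\ y + z\,s_y,\ z)$,
which translates the horizontal plane at height $z$ by $z\cdot s$ and leaves the $z$-coordinate unchanged. First I would record the two boundary facts: on the plane $z=0$ the map $\phi$ is the identity, so it fixes $P$ pointwise; and on the plane $z=1$ it is translation by $s$, so it sends each $p'_i$ to $p'_i+s=q'_i$, i.e.\ $P'$ to $Q'$. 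Since $\phi$ is an invertible affine map (a shear with determinant $1$), it sends line segments to line segments and triangles to triangles.

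The key step is to check that $\phi$ maps the banded surface for $(P,P')$ with chord choices $\{R_i\}$ onto the surface for $(P,Q')$ with the \emph{same} chord choices, facet by facet. For the quadrilateral at index $i$, the right-chord triangles $p_i p_{i+1} p'_{i+1}$ and $p_i p'_{i+1} p'_i$ are carried by $\phi$ to $p_i p_{i+1} q'_{i+1}$ and $p_i q'_{i+1} q'_i$, which are exactly the right-chord triangles of the quadrilateral $p_i, p_{i+1}, q'_{i+1}, q'_i$; the left-chord case is symmetric. Because $\phi$ is a bijection of $\mathbb{R}^3$, two facets of the first surface intersect (other than along shared boundary) if and only if their images do. Hence the surface for $(P,Q')$ is non-self-intersecting precisely when the one for $(P,P')$ is, which holds by hypothesis. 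Moreover, $\phi$ preserves $z$-coordinates, so it preserves monotonicity (each horizontal slice is mapped to a horizontal slice by a rigid translation, hence stays simple) and the homeomorphism type of an annulus; and being injective it carries the vertex-disjoint paths $\pi_i$ from $p_i$ to $p'_i$ to vertex-disjoint paths from $p_i$ to $q'_i$ built from the same chords. Together these yield a banded surface without Steiner points between $P$ and $Q'$.

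I do not expect a serious obstacle; the one point needing care is ensuring that ``non-self-intersection of the surface'' really is an intersection condition on the (open) facets, so that it transfers cleanly under the bijection $\phi$ rather than being affected by the degeneration of triangles to segments or points at $z=0,1$. This is exactly the ``conflict of open triangles'' formulation underlying the 2-SAT reduction (Lemma~\ref{lem:2SAT}), so once non-self-intersection is phrased in those terms the invariance under $\phi$ is immediate. As a consistency check, restricting $\phi$ to the plane $z=t$ reproduces Lemma~\ref{lem:translation_new}: the slice of the surface for $(P,Q')$ at height $t$ is the corresponding slice for $(P,P')$ translated by $t\cdot s$, which is the cross-section statement specialized to this surface.
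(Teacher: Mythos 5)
Your proof is correct, but it takes a different route from the paper's. The paper first reinterprets the banded surface itself as a linear morph: it duplicates vertices of $P$ and $P'$ according to the chord choices (inserting zero-length edges), so that the two triangles in each quadrilateral become the trace of edges morphing linearly between the duplicated polygons $P_D$ and $P'_D$; it then invokes Lemma~\ref{lem:translation_new} to conclude that every cross-section of the new surface is a translate of the corresponding cross-section of the old one, hence still simple. You instead exhibit the global shear $\phi(x,y,z)=(x+zs_x,\,y+zs_y,\,z)$ and observe that it carries the entire old surface, facet by facet, onto the candidate new surface, so that non-self-intersection, monotonicity, the annulus topology, and the disjoint paths $\pi_i$ all transfer through the bijection at once. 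The two arguments rest on the same geometric fact---your $\phi$ restricted to the plane $z=t$ is exactly the translation by $t\cdot s$ of Lemma~\ref{lem:translation_new}---but your packaging avoids the vertex-duplication device entirely and is arguably cleaner and more general (it applies verbatim to surfaces with Steiner points). What the paper's formulation buys is the explicit identification of a banded surface with a ``linear morph with bent edges,'' a viewpoint it reuses thematically elsewhere. Your closing caveat about phrasing non-self-intersection as a condition on open facets is well taken but harmless: since $\phi$ is a bijection of $\mathbb{R}^3$, any reasonable formulation of (non-)intersection is preserved.
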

\begin{proof}
\changed{We show that the banded surface between $P$ and $P'$ is the same as the linear morph between two modified polygons $P_D$ and $P'_D$, which we now define.}
Initially start with $P$ and $P'$. For each $i=1,\dots,n$, if we \changed{chose} the right chord $p_ip_{i+1}'$, then duplicate vertex $p_i$ in $P_D$ (inserting an edge of length 0) and duplicate vertex $p_{i+1}'$ in $P_D'$.  Proceed symmetrically if we \changed{chose} the left chord.  Now consider \changed{the} linear morph from $P_D$ to $P_D'$, where vertices that have been inserted due to a chord correspond to each other.  Say we \changed{chose} the right chord $p_ip_{i+1}'$.  Then the zero-length edge $p_ip_i$ in $P_D$ morphs to edge $p_i'p_{i+1}'$ in $P_D'$, hence forms a triangle.  Likewise edge $p_ip_{i+1}$ in $P_D$ morphs to zero-length edge $p_{i+1}'p_{i+1}'$ in $P_D'$, and also forms a triangle.  The two triangles together form exactly the part of the banded surface between edges $p_ip_{i+1}$ and $p_i'p_{i+1}'$ in $P$ and $P'$.

\changed{Since the banded surface is the same as the linear morph from $P_D$ to $P'_D$} the
result now follows from Lemma~\ref{lem:translation_new}.
\end{proof}

\changed{We now turn to rotations, beginning with this result on} 
linear morphs when the target-polygon is a rotation of the source-polygon:

\begin{lemma}
\label{lem:linear-rotate}
Let $P$ be a polygon and let $P'$ be a rotation of $P$ about an origin $X$ by an angle $\alpha$.  For any $0<t<1$ 
let $P_t$ be the polygon at time $t$ during a\ linear morph from $P$ to $P'$.  If $\alpha\neq \pi$ or $t\neq \frac{1}{2}$
then $P_t$ is a rotated copy of $P$ that has been scaled by $s_t\neq 0$.
\end{lemma}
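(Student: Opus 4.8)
The plan is to pass to the complex-number representation of the plane, which turns the rotation into multiplication by a fixed scalar and collapses the whole computation into a single line.

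First I would normalize so that the center of rotation $X$ is the origin. Since translating $P$ and $P'$ by a common vector translates every intermediate polygon $P_t$ by that same vector (this is immediate from the definition of the linear morph, and is the content of Lemma~\ref{lem:translation_new} in the degenerate case where the two translations agree), assuming $X=0$ is without loss of generality. I then identify the $z=0$ plane with $\mathbb{C}$, so that each vertex $p$ of $P$ is a complex number and the corresponding vertex of $P'$ is $p' = e^{i\alpha}p$.

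Next I would compute the morphed vertex directly. For $0<t<1$,
$$p_t = (1-t)\,p + t\,p' = \big((1-t) + t\,e^{i\alpha}\big)\,p = c_t\,p,$$
where $c_t := (1-t) + t\,e^{i\alpha}$ depends only on $t$ and $\alpha$ and not on the chosen vertex $p$. Hence \emph{every} vertex of $P$ is multiplied by the same complex scalar $c_t$, so $P_t$ is the image of $P$ under the map $z \mapsto c_t z$. Writing $c_t = s_t\,e^{i\beta_t}$ in polar form with $s_t = |c_t|$, this map is the composition of a rotation by $\beta_t$ and a scaling by $s_t$ about the origin $X$; therefore $P_t$ is a rotated, scaled copy of $P$, exactly as claimed.

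The only step requiring genuine care is verifying the nondegeneracy $s_t = |c_t| \neq 0$, and this is precisely where the hypothesis enters. Separating real and imaginary parts, $c_t = 0$ forces $t\sin\alpha = 0$ and $(1-t) + t\cos\alpha = 0$. Since $0<t<1$, the first equation gives $\sin\alpha = 0$, so $\alpha \in \{0,\pi\}$; for $\alpha = 0$ we get $c_t = 1 \neq 0$, while for $\alpha = \pi$ we get $c_t = 1-2t$, which vanishes exactly when $t = \frac{1}{2}$. Thus $c_t = 0$ occurs \emph{only} in the single excluded configuration $\alpha = \pi$ and $t = \frac{1}{2}$, and under the stated hypothesis $s_t \neq 0$. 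I expect this case analysis, rather than the main computation, to be the sole point needing attention, since it is what pins down the exceptional case in the lemma's statement.
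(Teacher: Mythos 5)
Your proof is correct, and it reaches the same structural conclusion as the paper --- that the map $p \mapsto p_t$ is a single rotation-plus-scaling about $X$ applied uniformly to all vertices --- but by a genuinely different route. The paper argues geometrically: it first disposes of $\alpha = \pi$ as a separate case ($p_t = (1-2t)p$), and then for $\alpha < \pi$ considers the triangles $\Delta\, p X p'$, observes they are all similar, and reads off that the point dividing the side opposite $X$ in ratio $t/(1-t)$ is obtained from $p$ by a rotation and scaling independent of $p$. Your complex-number computation $p_t = \bigl((1-t) + t e^{i\alpha}\bigr)p = c_t\, p$ collapses all three of the paper's cases ($\alpha = \pi$, $\alpha < \pi$, $\alpha > \pi$) into one line, and the degeneracy analysis $c_t = 0 \iff (\alpha = \pi \text{ and } t = \tfrac12)$ falls out of separating real and imaginary parts rather than being handled by inspection of the $\alpha=\pi$ case. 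What the algebraic route buys is uniformity and an explicit formula for the scale factor $s_t = |c_t|$ and rotation angle $\beta_t = \arg c_t$; what the paper's route buys is a picture (the similar triangles $\Delta_p$) that makes visible why $\theta_t$ and $s_t$ do not depend on $p$. One small inaccuracy worth fixing: your parenthetical claim that the reduction to $X = 0$ is a degenerate case of Lemma~\ref{lem:translation_new} is not quite right, since that lemma translates only $P'$; but your primary justification --- that translating both $P$ and $P'$ by a common vector translates every $P_t$ by that vector, immediately from $(1-t)(p+v) + t(p'+v) = p_t + v$ --- is correct and suffices.
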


\begin{proof}
We consider $P$, $P'$ and $P_t$ projected to the $xy$ plane.
If $\alpha = \pi$ then every point $p$ of $P$ maps to to $-p$ in $P'$, which implies $p_t=(1-2t)p$.  So $P_t=s_t P$ for $s_t=1-2t$, which is non-zero for $t\neq \frac{1}{2}$.    

Now suppose that $\alpha <  \pi$ (the case $\alpha>\pi$ is symmetric).  For any point $p$ of $P$, consider the triangle $\Delta_p:=\Delta pXp'$, where $X$ is the center of the rotation.  
Note that $\Delta_p$ and $\Delta_q$ are similar for any two points $p$ and $q$ of $P$, since they both have angle $\alpha$ and two equal-length incident sides; in particular $\Delta_q$ is obtained from $\Delta_p$ by scaling by $||q||/||p||$ and (possibly) rotating.  Also notice that $p_t$
travels along the side of $\Delta_p$ opposite to angle $\alpha$, and is at the point that divides the side at ratio $t/(1-t)$.  We can view $p_t$ as having been rotated by some angle $\theta_t$ and scaled by some $s_t>0$. Both $\theta_t$ and $s_t$ are independent of the choice of $p$ since all triangles $\Delta_p$ are similar.  Therefore $P_t$ is obtained from $P$ by scaling by $s_t$ and rotating by $\theta_t$.
\end{proof}
\begin{lemma}
\label{lem:rotation}
Let $P$ be a convex polygon and let $P'$ be a rotation of $P$ about an origin $X$ by an angle $\alpha< \pi$.  
Then there is a banded surface without Steiner points between $P$ and $P'$.
\end{lemma}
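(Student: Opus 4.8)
The plan is to build the banded surface explicitly from a single global chord choice dictated by the orientation of the rotation, and then verify non-self-intersection by checking that every horizontal cross-section is a simple polygon. Assume the rotation is counterclockwise (the clockwise case is symmetric, swapping the roles of the left and right chords). First I would invoke Lemma~\ref{lem:translation}: the existence of a banded surface is unchanged under translating $P'$, and a nonzero rotation followed by a translation is again a rotation by the same angle $\alpha$ about a shifted center (since $I-R_\alpha$ is invertible for $\alpha\neq 0$). Hence we may assume the rotation center $X$ lies in the interior of the convex polygon $P$; place $X$ at the origin. I would then pick the left chord $p_{i+1}p'_i$ for every $i$, yielding a monotone surface whose paths $\pi_i$ are the segments $p_ip'_i$. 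Because the surface is monotone in $z$, it is embedded if and only if its intersection with each plane $z=t$, $t\in(0,1)$, is a simple polygon, so everything reduces to that cross-section.

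Next I would describe the cross-section at height $t$. As in Figure~\ref{fig:Sconhardt-stretch}, the left-chord cross-section is the closed polygon visiting, in order, the points $v_i=(1-t)p_i+tp'_i$ and the bend points $c_i=(1-t)p_{i+1}+tp'_i$, with $c_i$ inserted between $v_i$ and $v_{i+1}$. The edge $v_ic_i$ equals the vector $(1-t)(p_{i+1}-p_i)$, parallel to the initial edge of $P$, while $c_iv_{i+1}$ equals $tR_\alpha(p_{i+1}-p_i)$, parallel to the final edge of $P'$. By Lemma~\ref{lem:linear-rotate}, the $v_i$ are the vertices of $P_t=M_tP$, where $M_t=(1-t)I+tR_\alpha=s_tR_{\theta_t}$ is a scaling by $s_t>0$ composed with a rotation by $\theta_t\in(0,\alpha)$; in particular $\{v_i\}$ is a convex polygon containing the origin.

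To prove the cross-section simple I would show it is star-shaped about the origin, by checking that the polar angle $\arg(\cdot)$ increases monotonically (by a total of $2\pi$) as we traverse $v_1,c_1,v_2,c_2,\ldots,v_n,c_n$. This splits into two claims: (i) $\arg(v_i)$ is strictly increasing, which is immediate since $v_i=s_tR_{\theta_t}p_i$ and the $p_i$ wind once around the interior origin as $P$ is convex; and (ii) each bend satisfies $\arg(v_i)\le\arg(c_i)\le\arg(v_{i+1})$, i.e.\ $c_i$ lies in the angular wedge subtended at the origin by the edge $v_iv_{i+1}$. Using $c_i-v_i=(1-t)(p_{i+1}-p_i)$ and $v_{i+1}-v_i=s_tR_{\theta_t}(p_{i+1}-p_i)$ together with $0<\theta_t<\alpha<\pi$, claim (ii) amounts to the two cross-product inequalities $v_i\times c_i\ge 0$ and $c_i\times v_{i+1}\ge 0$; combined with (i) (and the fact that each straight edge subtends less than $\pi$ at the interior origin) this makes $\arg$ increase monotonically along the entire traversal, so the cross-section is a simple polygon.

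I expect claim (ii) to be the main obstacle, with the delicate regime being $\alpha$ close to $\pi$: there $s_t$ can be small near $t=\tfrac12$ (the polygon $P_t$ is nearly degenerate, though never degenerate for $\alpha<\pi$ by Lemma~\ref{lem:linear-rotate}), while the bends $c_i$ protrude far from $\{v_i\}$, producing long spikes that one must confirm stay within their wedges and do not cross neighboring spikes. This is precisely where the strict inequality $\alpha<\pi$ and the choice of the \emph{left} rather than the \emph{right} chord are used: the right chord produces bends on the rotated side that leave their wedges and cause crossings once $\alpha$ is large. Finally, I note a shorter but less self-contained route: Lemma~\ref{lem:linear-rotate} already shows that for $\alpha<\pi$ every $P_t$ is a nondegenerate scaled rotation of the convex polygon $P$, so the linear morph preserves planarity, and one could then appeal to the general convex planarity-preserving case (item~4 of our results) to conclude.
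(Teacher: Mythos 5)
Your closing remark is, in fact, the paper's entire proof of this lemma: by Lemma~\ref{lem:linear-rotate} every intermediate polygon of the linear morph is a nonzero scaled rotation of the convex polygon $P$, so the morph preserves planarity, and Theorem~\ref{theorem:convex-morph-revised} then gives the banded surface. (Your chord choice --- all left chords for a counterclockwise rotation --- is also exactly what that theorem produces here, since $\theta_i=\alpha<\pi$ for every edge.) Had you led with that two-line deduction, you would be done.

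The direct argument you actually develop, however, has a genuine gap: claim (ii) is false, and the cross-section is in general \emph{not} star-shaped about the rotation center. With $v_i=(1-t)p_i+tR_\alpha p_i$ and $c_i=(1-t)p_{i+1}+tR_\alpha p_i$ one computes
\[
v_i\times c_i \;=\; (1-t)^2\,\bigl(p_i\times p_{i+1}\bigr)\;+\;t(1-t)\,\bigl(R_\alpha p_i\bigr)\times\bigl(p_{i+1}-p_i\bigr),
\]
and the second term can be negative: as $\alpha\to\pi$ we have $R_\alpha p_i\to -p_i$, so the whole expression tends to $(1-t)(1-2t)\,(p_i\times p_{i+1})$, which is negative for every $t>\tfrac12$. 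Concretely, for the equilateral triangle $p_1=(1,0)$, $p_2=(-\tfrac12,\tfrac{\sqrt3}{2})$, $p_3=(-\tfrac12,-\tfrac{\sqrt3}{2})$ with $\alpha=\pi-0.1$ and $t=\tfrac34$, the polar angle \emph{decreases} by about $10^\circ$ from $v_1$ to $c_1$: the outward spike leans backwards out of its wedge. The cross-section is still a simple polygon, but your monotone-argument certificate does not establish this, and this is precisely the regime ($\alpha$ near $\pi$, spikes much longer than the shrunken $P_t$) that you yourself flag as delicate without resolving it. The paper's Theorem~\ref{theorem:convex-morph-revised} avoids the issue by arguing locally --- consecutive triangles $\Delta_i^t$ attached to the outside of $P^t$ are confined to disjoint regions bounded by the rays $r_i^0,r_j^1$ --- rather than globally via a star center. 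Your reduction to an interior rotation center via Lemma~\ref{lem:translation} and your description of the cross-section are correct; only the simplicity certificate is broken.
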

\begin{proof}
Observe first that 
the linear morph from $P$ to $P'$ preserves 
\changed{planarity} 
since, \changed{by Lemma~\ref{lem:linear-rotate},} each intermediate polygon is a rotated and scaled copy of $P$.  
By Theorem~\ref{theorem:convex-morph-revised} (forthcoming, but there is no circularity) this implies the existence of a banded surface without Steiner  points.
\end{proof}

\section{Linear Morphing versus Banded Surface Reconstruction}


In this section we compare the existence of a planarity-preserving linear morph from $P$ to $P'$ and the existence of a banded surface without Steiner points.  
In general, these two properties are independent, i.e., neither implies the other.
Figure~\ref{fig:neg-examples}(b) shows an example of two triangles that have a banded surface  without Steiner points, but the linear morph does not preserve planarity.  
Figure~\ref{fig:star} shows an example of two stars that do not have a banded surface without Steiner points, but the linear morph preserves planarity.

When the \changed{polygons $P$ and $P'$ are convex},
there is an implication:

\begin{theorem}
\label{theorem:convex-morph-revised}
If $P$ and $P'$ are convex and the linear morph from $P$ to $P'$ preserves 
\changed{planarity}, 
then there is a banded surface without Steiner points between $P$ and $P'$.
\end{theorem}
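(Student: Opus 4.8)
The plan is to reduce the theorem to a purely two-dimensional, slice-by-slice statement and then to make a single, $t$-independent chord choice that keeps every slice simple. Recall that a non-self-intersecting monotone surface is exactly one whose intersection with each plane $z=t$ is a single simple closed polygon, and that the vertical edges $p_ip_i'$ (which any Steiner-point-free surface must use) supply the vertex-disjoint paths $\pi_i$ automatically. So it suffices to choose, for each $i$, a right or left chord such that for every $t\in[0,1]$ the cross-section $C^t := \mathcal{S}\cap\{z=t\}$ is a simple polygon. First I would compute this cross-section explicitly. Writing $p_i^t=(1-t)p_i+tp_i'$ for the vertices of the linear morph $P_t$, the plane $z=t$ cuts band $i$ into the two-segment path from $p_i^t$ to $p_{i+1}^t$ through the bend point $b_i^R(t)=(1-t)p_i+tp_{i+1}'$ for the right chord, or through $b_i^L(t)=(1-t)p_{i+1}+tp_i'$ for the left chord. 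One checks that $p_i^t,\,b_i^R(t),\,p_{i+1}^t,\,b_i^L(t)$ are the four corners of a parallelogram whose diagonal is the morph edge $p_i^tp_{i+1}^t$; thus the chord choice simply decides to which side of that edge the bend is pushed, and---setting $e_i=p_{i+1}-p_i$, $e_i'=p_{i+1}'-p_i'$---the side is governed by the sign of the scalar cross product $e_i\times e_i'$, which is independent of $t$. Hence each fixed chord assignment bends edge $i$ of $P_t$ to the same side for all $t$.

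This $t$-independence is what lets a single choice work simultaneously at every height. My choice would be to bend every edge of $P_t$ to its \emph{exterior} (for $P_t$ oriented counterclockwise the interior lies consistently to the left of each edge, even when $P_t$ is non-convex, so ``exterior'' means the right side of the edge); concretely, take the right chord when $e_i\times e_i'<0$ and the left chord when $e_i\times e_i'>0$. With this choice each band's slice is a detour of the morph edge into the local exterior of $P_t$, confined to the parallelogram $\Pi_i^t$ spanned by $e_i,e_i'$ based at $p_i^t$. The slice $C^t$ then visits $p_1^t, b_1, \dots, p_n^t, b_n$ in cyclic order, and when $P_t$ happens to be convex it is immediate that $C^t$ is convex, hence simple.

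The remaining and main task is to prove that $C^t$ stays simple for all $t$ even at heights where $P_t$ is non-convex---and this is where I expect the real difficulty, since (as the equilateral-triangle rotation toward angle $\pi$ already suggests) $C^t$ need not be convex. I would run a continuity argument: at $t=0$ the bends degenerate ($b_i^R(0)=p_i$, $b_i^L(0)=p_{i+1}$) so $C^0=P$ is simple, and $C^t$ varies continuously, so if simplicity ever failed there would be a first height $t^\ast$ at which two points of $C^{t^\ast}$ coincide. To rule this out I would use that $P_{t^\ast}$ is simple by hypothesis (planarity preservation) and that each bend is trapped in its parallelogram $\Pi_i^{t^\ast}$ pointing into the exterior of $P_{t^\ast}$. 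The delicate points are that outward bends of two edges meeting at a reflex vertex do not cross, and that bends of non-adjacent bands cannot reach one another; both should be forced by the convexity of $P$ and of $P'$. Concretely, since $e_1,\dots,e_n$ and $e_1',\dots,e_n'$ are each angularly sorted, I would group the $2n$ edge directions of $C^t$ into the angular intervals $[\angle e_i,\angle e_i']$ and argue that planarity preservation of the morph is exactly what keeps these intervals from interleaving destructively; converting this angular bookkeeping into a clean non-crossing statement---most plausibly by showing $C^t$ is star-shaped with respect to a suitable moving center, or that its turning number is $1$ with no coinciding vertices---is the crux. Once each slice is shown simple, the union is a non-self-intersecting monotone banded surface without Steiner points, proving the theorem.
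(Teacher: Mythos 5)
Your setup is exactly the paper's: you reduce to showing every cross-section $C^t$ is a simple polygon, you compute the bend points and observe (correctly) that $p_i^t$, $b_i^R(t)$, $p_{i+1}^t$, $b_i^L(t)$ form a parallelogram with diagonal $p_i^tp_{i+1}^t$ so that the chord choice pushes the bend to a $t$-independent side, and you make the same chord choice the paper makes (always bend to the exterior of $P^t$; the paper phrases this via the angle $\theta_i$ from $v_i^0$ to $v_i^1$ being less than or greater than $\pi$). But the proof stops where the actual work begins. You explicitly label the non-crossing argument ``the crux'' and offer only candidate strategies, and the two you name do not work: turning number $1$ with distinct vertices does not imply simplicity (Whitney's theorem only gives the converse), and your claim that $C^t$ is convex whenever $P^t$ is convex is false --- take $P$ a square and $P'$ its rotation by an angle close to $\pi$; every $P^t$ is a scaled rotated square, yet the outward bends make $C^t$ a non-convex star (your own later remark about the near-$\pi$ triangle rotation contradicts this claim). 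Star-shapedness of $C^t$ about a moving center is also not something the paper establishes or needs.

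What the paper actually does, and what is missing from your proposal, is the following combinatorial-geometric argument. Classify each outward triangle $\Delta_i^t$ as ``01'' (left chord, edge directions $v_i^0$ then $v_i^1$) or ``10'' (right chord, the reverse), and split the circular sequence into maximal runs of the same type. Within a run of 01 triangles, the convexity of $P$ and of $P'$ (angular sortedness of the $v_i^0$'s and of the $v_i^1$'s) shows each triangle's apex stays on the correct side of the bounding rays $r_i^0$ (from $p_i^t$ in direction $v_i^0$) and $r_j^1$, so the whole run is confined to a region $R_{i,j}$ bounded by the subchain of $P^t$ and those two rays, and no two triangles in the run intersect; symmetrically for 10 runs. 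The only remaining danger is at a transition from a 10 triangle $\Delta_{i-1}^t$ to a 01 triangle $\Delta_i^t$, and here is where planarity preservation is used quantitatively: if the two bends swept far enough to overlap, i.e.\ $\theta_i + (2\pi - \theta_{i-1}) \ge \angle p_{i-1}p_ip_{i+1}$, then by continuity the angle $\angle p_{i-1}^t p_i^t p_{i+1}^t$ would vanish at some time, contradicting simplicity of $P^t$. Hence the two triangles are disjoint, $P^t$ is convex at $p_i^t$, and consecutive regions $R_{i,j}$ are disjoint. Without some argument of this kind --- in particular, without the step that converts ``the morph stays planar'' into the angle inequality at 10/01 transitions --- the proposal is a correct reduction plus an unproven core claim.
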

\begin{proof}
Let $p_i^t$ be the position of the $i$th vertex at time ($z$-coordinate) $t$ during the linear morph. In  particular, $p^0_i = p_i$ and $p^1_i = p'_{i}$. 
Let $P^t$ be the polygon at time $t$ during the morph.
Note that $P^t$ is not necessarily convex.
By our convention of numbering polygons in counterclockwise order, the inside of $P$ is to the left of $p_i p_{i+1}$, and the inside of $P'$ is to the left of $p'_i p'_{i+1}$. 
Also, because the linear morph preserves planarity, the inside of $P^t$ is to the left of $p_i^t p_{i+1}^t$.

We begin by defining the surface ${\cal S}$, i.e., which chords to use. 
Let $v^0_i$ be the vector $p_{i+1} - p_i$ 
\changed{in} the $xy$ plane, and let 
$v^1_i$ be the vector $p'_{i+1} - p'_i$ projected to the $xy$ plane.
Let $\theta_i$ be the angle between $v^0_i$ and $v^1_i$, measured towards the inside of $P$,
\changed{as shown in Figure~\ref{fig:convex-morph}}.   We distinguish 3 cases:
\begin{itemize}
    \item If 
    $\theta_i < \pi$, use the left  chord $p_{i+1}p'_i$. 
    In the cross-section of $\cal S$ at $z$-coordinate (or time) $t$, the edge $p_i^t p_{i+1}^t$ is replaced by a segment in the direction $v^0_i$ followed by a segment in the direction $v^1_i$.  
    We call the resulting triangle $\Delta^t_i$ and refer to it as a 01 triangle. 
    Observe that $\Delta^t_i$ lies to the outside of the edge $p_i^t p_{i+1}^t$.  
    See Figure~\ref{fig:convex-morph}(a).
    \item If 
    $\theta_i > \pi$, use the right chord $p_ip'_{i+1}$.  
    Then, in the cross-section at $z$-coordinate $t$, the edge $p_i^t p_{i+1}^t$ is replaced by a segment in the direction $v^1_i$ followed by a segment in the direction $v^0_i$. 
    We refer to the  resulting triangle $\Delta^t_i$ as a 10 triangle. 
    Again, $\Delta^t_i$ lies to the outside of $p_i^t p_{i+1}^t$, see Figure~\ref{fig:convex-morph}(b).
    \item If $\theta_i = \pi$, use either chord---in this case the quadrilateral $p_i, p_{i+1} p'_{i+1} p'_{i}$ is coplanar, and $\Delta_i^t$ collapses to  the edge $p_i^t p_{i+1}^t$.
\end{itemize}

\begin{figure}[htb]
    \centering
    \includegraphics[width=.6\columnwidth]{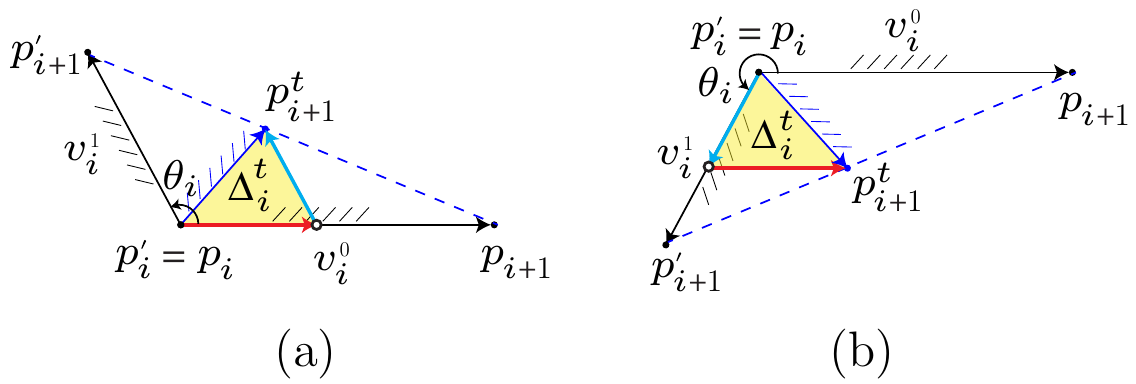}
    \caption{ A top view to illustrate choosing chords in the proof of Theorem~\ref{theorem:convex-morph-revised}.  In order to show the angles clearly, $P'$ has been translated so that $p_i$ is at the same $xy$-coordinates as $p'_{i}$. (Lemma~\ref{lem:translation} justifies this.)  Note that $p_i^t$ then remains at these same $xy$-coordinates.
    Hatching indicates the inside of the polygon
    \changed{on that edge}.  
    (a) If $\theta_i < \pi$, use a left chord to obtain a 01 triangle $\Delta^t_i$. 
    (b) If $\theta_i > \pi$, use a right chord to obtain a 10 triangle $\Delta^t_i$.
    The segments that replace $p_i^t p_{i+1}^t$ are shown in red/cyan.
    }
    \label{fig:convex-morph}
\end{figure}

We now prove that ${\cal S}$, as defined by the above  chord choices, is non-self-intersecting, which proves that $\cal S$ is a banded surface without Steiner points. In  particular, we will prove that ${\cal S}^t$, the cross-section of $\cal S$ at $z$-coordinate $t$ is a simple polygon.
By assumption, the polygon $P^t$ with vertices $p^t_1, p^t_2, \ldots, p^t_n$ is \changed{simple}.  
${\cal S}^t$ consists of $P^t$ plus triangles $\Delta^t_i$ added to the outside of each edge.  See Figure~\ref{fig:convex-morph-config-2}.
We will show that no two triangles intersect.

\begin{claim}
\label{claim:0101}
Suppose that $\Delta_i^t, \Delta_{i+1}^t, \ldots, \Delta_j^t$ are all 01 triangles. Let $r_i^0$ be the ray  from $p_i^t$ in  the direction $v^0_i$.  
Then none of these triangles cross $r_i^0$ from its left to its right.
\end{claim}
\begin{proof}
It suffices to prove that no triangle crosses the ray of the previous triangle, so consider triangle $\Delta_{i+1}^t$ and $r_i^0$.  The apex of $\Delta_{i+1}^t$ lies on $r_{i+1}^0$.  Rays $r_i^0$ and $r_{i+1}^0$ 
\changed{emanate from the endpoints of the edge $p_i^t p_{i+1}^t$}
and the angle between $r_i^0$ and $r_{i+1}^0$ is positive (counterclockwise). 
Thus the apex of $\Delta_{i+1}^t$ lies to the left of $r_i^0$.
\end{proof}

\begin{figure}[htb]
    \centering
    \includegraphics[width=.4\columnwidth]{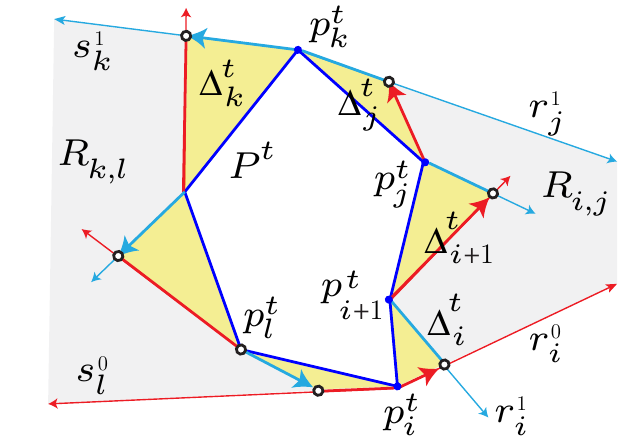}
    \caption{Polygon $P^t$ \changed{(in blue)} with 01 triangles $\Delta_i^t, \ldots, \Delta_j^t$ and 10 triangles \changed{$\Delta_{k}^t, \ldots , \Delta_l^t$}.  In general, there may be many alternations between 01 and 10 triangles.  
    }
    \label{fig:convex-morph-config-2}
\end{figure}

\changed{
Define $r_i^1$ to be the ray from $p_{i+1}^t$ in the direction $- v^1_i$.
Thus a 01 triangle $\Delta_i^t$ is bounded by $r_i^0$ and $r_i^1$. 
Symmetrically, for a 10 triangle, define $s_i^1$ to be the ray from $p_i^t$ in the direction $v^1_i$, 
and $s_i^0$ to be the ray from $p_{i+1}^t$ in the direction $- v^0_i$.
Thus a 10 triangle $\Delta_i^t$ is bounded by $s_i^1$ and $s_i^0$. 
}

\changed{
From Claim~\ref{claim:0101}, by symmetry, we obtain (see Figure~\ref{fig:convex-morph-config-2}):
}
\begin{claim}
\label{claim:1010}
\changed{If $\Delta_i^t, \Delta_{i+1}^t, \ldots, \Delta_j^t$ are 01 triangles then none of them cross 
$r_j^1$ from right to left.
If $\Delta_k^t, \Delta_{k+1}^t, \ldots, \Delta_l^t$ are 10 triangles then none of them cross $s_k^1$ from left to right and none of them cross $s_l^0$ from right to left.
}
%
\end{claim}

These two claims imply that ${\cal S}^t$ is simple if all the triangles are the same (all 01 or all 10). It remains to consider the possibility that there are triangles of both types.

\begin{claim}
\label{claim:101}
Suppose $\Delta_{i-1}^t$ is a 10 triangle and $\Delta_i^t$ is a 01 triangle.  
Then $\Delta_{i-1}^t$ and $\Delta_i^t$ are disjoint.
\changed{Furthermore, $P^t$ is convex at $p_i^t$.}
\end{claim}
\begin{proof}
We analyze the top-view projection with $P'$ translated so that $p_i$ and $p'_i$ are at the same $xy$-coordinates.

\changed{
Consider the angle $\alpha_i^t = \angle p_{i+1} p_i p_{i+1}^t$.  Because $\Delta_i^t$ is a 01 triangle, $\alpha_i^t$ goes from 0 to $\theta_i < \pi$.
Similarly, because $\Delta_{i-1}^t$ is a 10 triangle, the angle $\alpha_{i-1}^t = \angle p_{i-1} p_i p_{i-1}^t$ goes from 0 to $2\pi - \theta_{i-1} < \pi$.

If $p_i p^t_{i-1}$ and $p_i p^t_{i+1}$
cross over each other, as in Figure~\ref{fig:101}(a), i.e., 
$\theta_i + 2\pi - \theta_{i-1} \ge \angle p_{i-1} p_i p_{i+1}$,
then there must be some time $t$ when $\alpha_i^t + \alpha_{i-1}^t = \angle p_{i-1} p_i p_{i+1}$, i.e., angle $\angle p^t_{i-1} p^t_i p^t_{i+1}$ becomes 0.  But we assumed that $P^t$ remains simple, so this cannot happen.

\begin{figure}[htb]
    \centering
    \includegraphics[width=.6\columnwidth]{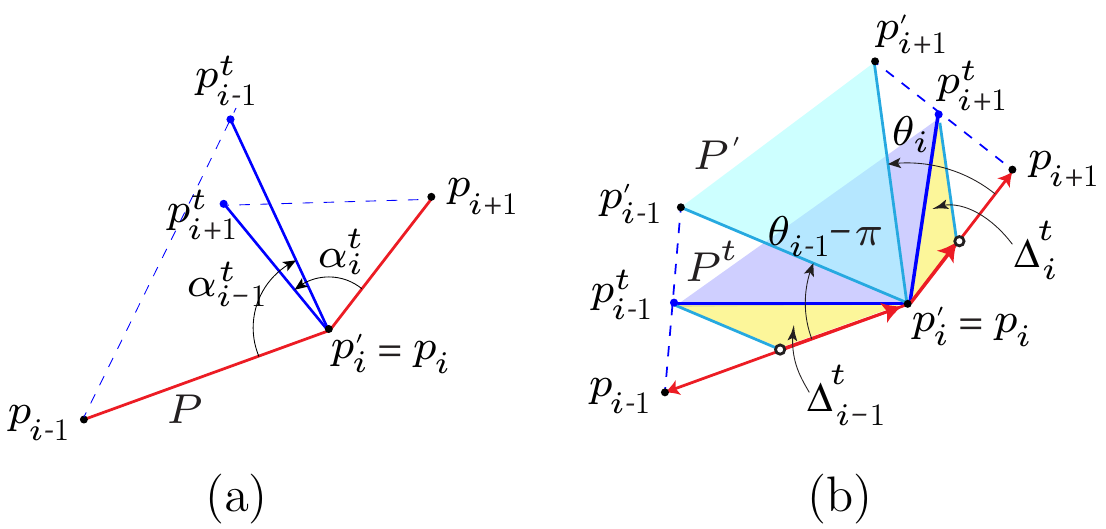}
    \caption{Illustration for the proof of Claim~\ref{claim:101}: (a) top view projection at $p_i = p'_i$ showing the angles $\alpha^t_i$ and $\alpha^t_{i-1}$; 
    (b) because $\theta_i + 2\pi - \theta_{i-1} < \angle p_{i-1} p_i p_{i+1}$, $P'$ lies inside $P$ at $p_i$. 
    }
    \label{fig:101}
\end{figure}

Thus we must have the situation shown in Figure~\ref{fig:101}(b), so 
$\Delta_{i-1}^t$ and $\Delta_i^t$ are disjoint 
and $P^t$ remains convex at $p_i^t$.
}
\end{proof}

With these claims in hand, we can complete the proof of the theorem.  
Divide the circular sequence $\Delta_1^t, \ldots, \Delta_n^t$ into maximal subsequences all of the same type (all 01 or all 10).  If $D_{i,j} = \Delta_i^t, \ldots, \Delta_j^t$ is such a maximal subsequence then by Claims~\ref{claim:0101} and~\ref{claim:1010}
no two triangles of $D_{i,j}$ intersect, and all the triangles of $D_{i,j}$ live in the region $R_{i,j}$ bounded by $p_i^t, \ldots, p_{j+1}^t$ and two bounding rays---\changed{$r_i^0$ and $r_j^1$} in the case of 01 triangles, as shown in Figure~\ref{fig:convex-morph-config-2}.
Between one sequence $D_{i,j}$ and the next, $D_{j+1,l}$, Claim~\ref{claim:101} implies that the regions $R_{i,j}$ and $R_{j+1,l}$ are disjoint. 
\end{proof}

\begin{figure}[htb]
    \centering
    \includegraphics[width=.7\columnwidth]{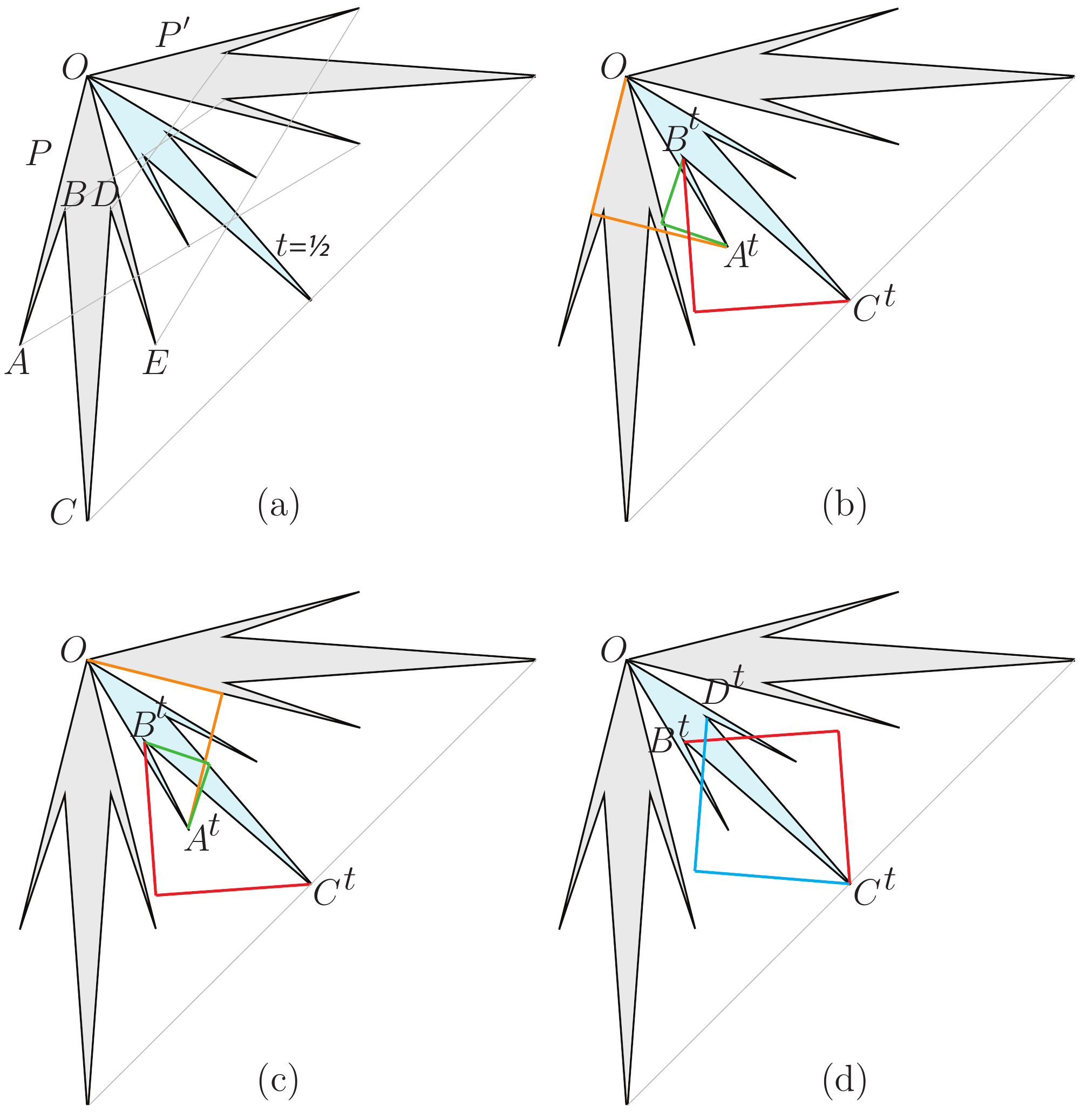}
    \caption{A star-shaped polygon $P$ where the linear morph to its $90^\circ$ rotation $P'$ preserves planarity, but there is no banded surface: (a) $P$ and $P'$ and the intermediate position of the linear morph at $t = \frac{1}{2}$  (shaded blue); (b) If we choose 
    \changed{the chord for edge $BC$ that produces the ``outward'' triangle (shown  in red)}
    then at $t = \frac{1}{2}$
    it intersects one choice for $OA$ and one choice for $BA$; (c) The other choices for $OA$ and $BA$ intersect each other; (d) Thus we are forced to
    \changed{choose the chord for edge $BC$ that produces the ``inward'' triangle (shown in red)},
    and, by symmetry, the ``inward'' \changed{triangle} 
    for $DC$ \changed{(shown in cyan)}---but these intersect.
    }
    \label{fig:star}
\end{figure}

\section{Conclusions}

We have introduced the idea of a banded surface to construct a polyhedron between two polygons in parallel slices and have explored some connections between linear morphs and banded surfaces without Steiner points.
Many  questions remain,  the  two main  ones being:

\begin{enumerate}     
    \item Is there a bound  better than $O(n^2)$ on the number of Steiner  points needed to construct a banded surface between  two $n$-vertex polygons?
    What if the polygons are convex?
    

    
    \item Is it NP-hard to minimize the number of Steiner points needed to construct a banded surface? 
    

\end{enumerate}


\small
\bibliographystyle{abbrv}

\bibliography{reconstruct}

\end{document}